\DeclareMathOperator*{\argmax}{arg\,max}
\DeclareMathOperator*{\argmin}{arg\,min}
\DeclareMathOperator*{\proj}{proj}
\newtheorem{theorem}{Theorem}[]
\newtheorem{example}{Example}[]
\newtheorem{proposition}[theorem]{Proposition}
\newtheorem{definition}{Definition}[]
\newcommand\cj[1]{\footnote{{\color{blue} Comment J: #1}}}
\definecolor{burgundy}{rgb}{0.8, 0.0, 0.13}
\definecolor{burgund}{rgb}{0.9, 0.13, 0.0}
\newcommand{\Alt}{\mathcal{X}}
\title{Truth-tracking via Approval Voting: Size Matters}
\author {
    Tahar Allouche,\textsuperscript{\rm 1}
    Jérôme Lang, \textsuperscript{\rm 1}
    Florian Yger \textsuperscript{\rm 1}
}
\begin{document}

\maketitle

\begin{abstract}
Epistemic social choice aims at unveiling a hidden ground truth given votes, which are interpreted as   noisy signals about it. We consider here a simple setting where votes consist of approval ballots: each voter approves a set of alternatives which they believe can possibly be the ground truth. Based on the intuitive idea that more reliable votes contain fewer alternatives, we define several noise models that are approval voting variants of the Mallows model. The likelihood-maximizing alternative is then characterized as the winner of a weighted approval rule, where the weight of a ballot decreases with its cardinality. We have conducted an experiment on three image annotation datasets; they conclude that rules based on our noise model outperform standard approval voting; the best performance is obtained by a variant of the Condorcet noise model. 
\end{abstract}

\section{Introduction}

Epistemic social choice deals with the problem of unveiling a hidden \emph{ground truth} state from a set of some possible states, given the reports of some voters. Votes are seen as \emph{noisy} reports on the ground truth. The distribution of these reports is modelled by a noise model, sometimes tuned by some parameter reflecting the competence (expertise, reliability) of these voters.

The space of frameworks for epistemic social choice varies along two 
dimensions: the nature of the ground truth and the format of the reports (ballots expressed by voters). Depending on the framework chosen, the ground truth may be a single alternative, a set of alternative or a ranking over alternatives. {\em We assume the simplest form of ground truth: it is a single alternative} (the unique correct answer). Still depending on the framework, ballots may also contain a single alternative, a set of alternatives, or a ranking over alternatives.  {\em We assume that they are subsets of alternatives, that is, approval ballots.} Requiring voters to give only one answer (that is, a single alternative) is often too constraining because voters may be uncertain and believe that several alternatives may possibly be the ground truth. This is the path followed by \cite{isapproval2015,approval2015,learning2017}.

While some classical anonymous rules have been shown to be optimal under some assumptions, the aggregation rule may, when possible, assign different weights to the voters according to their expertise.
Whilst this is doable if we have additional information about voter expertise or when we keep a record of their answers to different questions, estimating the individual expertise gets complicated when we have no prior information about voters and when the sole information we have are votes on a single issue. This leads to the \emph{single-question wisdom of the crowd problem} for which the seminal work \cite{solution2017} proposes a novel solution, namely selecting the alternative which is \emph{surprisingly popular}. Although it proved to be an efficient way to get around the problem of estimating the voters' reliabilities, its major drawback is that it requires the elicitation of further information: each voter has to report her answer and her beliefs about the answers of the remaining voters.

Now we suggest that there is an alternative approach that does not require this surplus of information and that 
simply relies on the truthfulness of voters.
\cite{approval2015} have defined a proper mechanism to incentivize the participants to select an alternative if and only if they believe it can be the winning one. An intuitive idea might be to consider that {\em smaller ballots, \textit{i.e.} answers that contain less alternatives, are more reliable}: a voter who knows the true answer (or, more generally, who believes to know it) will probably select only one alternative and a voter who selects all alternatives has probably no idea whatsoever of the correct answer. For instance,
if voters hear a speech and are asked to detect the language in which it is spoken, we may give more weight to a voter approving \textit{Arabic} and \textit{Hebrew} than to one approving \textit{Arabic}, \textit{Hebrew}, \textit{Persian} and \textit{Turkish}. 

Based on this intuition, more weight must be assigned to smaller ballots. Rules that work this way, which we call \emph{size-decreasing approval rules}, are part of the family of \emph{size approval rules} \cite{size2009}. 
Our goal is to motivate the use of such rules from an epistemic social choice point of view. To this purpose, we will study a family of noise models which are approval voting variants of the Mallows model, and prove that in many cases the optimal rule is \emph{size-decreasing}.

The paper is structured as follows. In Section \ref{sec: related} we discuss related work. In Section \ref{sec: framework} we define the framework and the family of noise models we consider. Section \ref{sec: anonymous} characterizes all anonymous noises whose associated optimal rule is size-decreasing. In Section \ref{sec: non-anonymous}, we consider a more general noise where voters have different noise parameters,
prove that under some mild assumption, \emph{the expected number of selected alternatives grows when the voter is less reliable}, and then give an explicit expression for the expected size of the ballot as a function of the reliability parameter of a voter for a Condorcet-like noise model. Section \ref{sec: experiments} focuses on real datasets on which first we test the hypothesis that smaller ballots are more reliable then we apply different size-decreasing rules associated to various noise models to assess their performances. Section \ref{sec: conclusion} concludes.

\section{Related Work}\label{sec: related}
\paragraph{Epistemic social choice}
It 
studies how a ground truth can be recovered from noisy votes, viewing voting rules as maximum likelihood estimators. It dates back from \cite{condorcet1785} and has lead to a lot of developments in the last 30 years. Condorcet's celebrated {\em jury theorem} considers $n$ independent, equally reliable voters and two alternatives that are {\em a priori} equally likely, and  states that if every voter votes for the correct alternative with probability $p>\frac{1}{2}$, then the majority rule outputs the correct decision with a probability that increases with the number of voters and tends to 1 when the latter grows to infinity. See \cite{collective2017} and \cite{Premises2008} for proofs and discussion.

The framework was later generalized to more than two alternatives 
\cite{condorcet1988}, to voters with different competences  \cite{ShapleyGrofman84,maximum2004}, to a nonuniform prior over alternatives \cite{Ben-YasharN97,optimal2001}, 
to various noise models \cite{common2005,ConitzerRX09}, to correlated votes \cite{voting2011,epistemic2017}, to multi-issue domains \cite{XiaCL10} and to multi-winner voting \cite{Evaluating2020}. 
\citet{truth2019} define a method to aggregate votes weighted according to their average proximity to the other votes as an estimation of their reliability. A review of the field can be found  in 
\cite{ElkindSlinko16}. 

Epistemic voting with approval ballots has scarcely been considered. \cite{isapproval2015} study noise models for which approval voting is optimal given $k$-approval votes, in the sense that the objectively best alternative gets elected, the ground truth being a ranking over all alternatives.
\cite{learning2017} prove that the number of samples needed to recover the ground truth ranking over alternatives with high enough probability from approval ballots is exponential if ballots are required to approve $k$ candidates, but polynomial if the size of the ballots is randomized. 

\paragraph{Crowdsourcing}

\cite{Axiomatic2014,Empirical2014} give a social choice-theoretic study of collective annotation tasks. \cite{ShahZ20} design mechanisms for incentive-compatible elicitation with approval ballots in crowdsourcing applications. \cite{solution2017} introduce the \emph{surprisingly popular} approach to solve the single-question problem.
The approach was successfully generalized to the case where the ground truth is a ranking \cite{HosseiniM0S21}.

Beyond social choice, collective annotation has also been studied in the machine learning community.
\cite{maximum1979} used an expectation-maximization (EM) approach 
for retrieving true binary labels. This approach has been improved along with other methods namely in \cite{learning2010,multidimensional2010,Minimax2017,domain2019}.

\section{Framework}\label{sec: framework}
Consider a set $N$ of $n$ voters and a set of $m \geq 2$ alternatives $\Alt=\{a_1\dots,a_m\}$.
 The (hidden) ground truth consists of a single alternative $a^*$. Voters cast approval ballots $A_i \subseteq \Alt$ consisting of their noisy estimates of the ground truth. Voters who approve no alternative or all alternatives do not bring any meaningful information, therefore without loss of generality, 
 we assume that for all $i$, $A_i \neq \emptyset$ and $A_i \neq \Alt$.

All along this paper, we will model the distribution of these approval ballots by approval voting variants of the \emph{Mallows} noise model. The Mallows distribution was originally defined on \emph{rankings}: we adapt it to subsets of alternatives, keeping the idea that the probability of a subset decreases as its distance from a central point increases, the dispersion being modelled by a parameter $\phi$.

In general, we will call an approval Mallows
noise model any model where voters' ballots are independent (we keep this hypothesis all along the paper) and there exist $n$ parameters $\phi_i \in (0,+\infty)$ and a function $d:\Alt \times \mathcal{P}(\Alt) \mapsto \mathds{R}$ such that  and for any voter $i \in N$:
$$ P_{\phi_i,d}( A_i | a^*=a) = \frac{1}{\beta_i} \phi_i^{d(a^*,A_i)},  \forall a \in \Alt$$
where $\beta_i$ is the corresponding normalization factor. If $\phi_i = \phi$ for all $i \in N$, we say the model is {\em anonymous}.

In the remaining of the paper we will only focus on \emph{neutral} noise models. The neutrality of a noise is defined as its invariance by any permutation of the alternatives:
$$\forall \pi \in \sigma(\Alt), P_{\phi,d}(A|a^*=a)=P_{\phi,d}(\pi(A)|a^*=\pi(a)) $$
We can immediately see that a Mallows noise is neutral if and only if its associated function $d$ is neutral (invariant by a permutation of the alternatives). 

A noise model is neutral if 
$d(a,A)$ depends only on 
$|a \cap A|$ (that is, 1 if $a \in A$ and 0 if $a \notin A$)\footnote{We omit the curly brackets and write $a \cap A$ for $\{a\} \cap A$.} and $|A|$:
\begin{proposition} 
A noise model associated to a function $d$ is neutral if and only if there exists a unique  function $\psi_d:\{0,1\} \times \{0,\dots,m\} \setminus (1,0)\mapsto \mathds{R}$, such that $d(a,A)= \psi_d(|a\cap A|,|A|)$.\footnote{ $(1,0)$ is excluded from the domain of $\psi_d$ simply because ($|A|=0$ and $|a \cap A| = 1$) is impossible.}
\end{proposition}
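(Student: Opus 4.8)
The plan is to prove both implications, relying on the remark just above the statement that the noise model is neutral if and only if the function $d$ is neutral, i.e.\ $d(\pi(a),\pi(A)) = d(a,A)$ for every $\pi \in \sigma(\Alt)$ and every pair $(a,A)$. Under this reduction the claim becomes a purely combinatorial statement about functions on $\Alt \times \mathcal{P}(\Alt)$. The reverse implication is then immediate: if $d(a,A) = \psi_d(|a\cap A|,|A|)$, then since every $\pi \in \sigma(\Alt)$ is a bijection of $\Alt$ it preserves cardinalities and membership, so $|\pi(A)| = |A|$ and $|\pi(a)\cap\pi(A)| = |a\cap A|$; hence $d(\pi(a),\pi(A)) = \psi_d(|a\cap A|,|A|) = d(a,A)$, and $d$ is neutral.

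For the forward implication, assume $d$ is neutral and define $\psi_d(k,s) := d(a,A)$ for any pair $(a,A)$ realizing $(|a\cap A|,|A|) = (k,s)$; at least one such representative exists for each admissible $(k,s)$, the combination $(1,0)$ being excluded from the domain because $|A|=0$ forces $|a\cap A|=0$. Everything hinges on showing that this value is independent of the chosen representative: whenever $|a\cap A| = |a'\cap A'|$ and $|A| = |A'|$, I claim there is a permutation $\pi \in \sigma(\Alt)$ with $\pi(a)=a'$ and $\pi(A)=A'$, and then neutrality gives $d(a,A) = d(\pi(a),\pi(A)) = d(a',A')$ at once. Granting the claim, $\psi_d$ is well defined, and it is unique since its value at each $(k,s)$ is forced to equal $d$ evaluated on any representative.

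It remains to construct $\pi$, which is the only real work, and I would do it by matching partition blocks of $\Alt$, splitting on $k = |a\cap A| \in \{0,1\}$. If $k=1$, then $a\in A$ and $a'\in A'$, and I partition $\Alt$ as $\{a\} \sqcup (A\setminus\{a\}) \sqcup (\Alt\setminus A)$, of sizes $1,\ s-1,\ m-s$, with the analogous partition for the primed data; letting $\pi$ send $a\mapsto a'$ and map the remaining two blocks bijectively onto their primed counterparts (the sizes coincide, so such bijections exist) yields a permutation with $\pi(A)=A'$. If $k=0$, then $a\notin A$ and $a'\notin A'$, and I use instead $\{a\} \sqcup A \sqcup (\Alt\setminus(A\cup\{a\}))$, of sizes $1,\ s,\ m-s-1$, defining $\pi$ in the same block-by-block fashion. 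In either case the desired $\pi$ exists, which establishes well-definedness and completes the argument. The main obstacle is exactly this construction: one must pick the correct partition in each of the two cases and verify that the corresponding block cardinalities agree, so that the matching bijections are available; everything else is bookkeeping.
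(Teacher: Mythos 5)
Your proposal is correct and follows essentially the same route as the paper: the easy direction uses that permutations preserve $|A|$ and $|a\cap A|$, and the forward direction defines $\psi_d$ on representatives and verifies well-definedness by exhibiting a permutation $\pi$ with $\pi(a)=a'$, $\pi(A)=A'$. The only difference is one of detail: the paper merely asserts that such a $\pi$ exists, while you construct it explicitly via the matching block partitions, which fills in the one step the paper leaves implicit.
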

\begin{proof}
If $d(a,A)= \psi_d(|a\cap A|,|A|)$ then since $\psi_d$ is neutral, $d$ is neutral. 
Conversely, assume $d$ is neutral. We claim that for any two pairs $(a,A)$ and $(b,B)$ such that $(|a\cap A|,|A|) = (|b\cap B|,|B|)$ we have $d(a,A)= d(b,B)$. Assume first $a \in A$ (and therefore, $b \in B$). Consider a permutation $\pi$ such that $\pi(a) = b$ and $\pi(A) = B$. Then $d(b,B) = d(\pi(a), \pi(A)) = d(a,A)$. The argument for the case $ a \notin A$ (and $b \notin B$) is similar. Thus, $d(a,A)$ depends only on $|a\cap A|$ and $|A|$, which means that there is a function $\psi_d:\{0,1\} \times \{0,\dots,m\} \setminus (1,0)\mapsto \mathds{R}$ such that $d(a,A)= \psi_d(|a\cap A|,|A|)$. Uniqueness is immediate.
\end{proof}

\begin{example}
For the Hamming distance we have that: $$d_H(a,A)=|\overline{a}\cap A|+|a \cap \overline{A}|=1-2|a\cap A|+|A|$$
so $d_H(a,A)=\psi_{d_H}(|a\cap A|,|A|)$,
where $\psi_{d_H}(t,j)=1-2t+j$.
$d_H(a,A)$ takes its minimal value 0 for $A = \{a\}$ and its maximal value $m$ for $A =\Alt \setminus \{a\}$.

For the Jaccard distance \cite{jaccard1901}
we have $$d_J(a,A)= \frac{|\overline{a}\cap A|+|a \cap \overline{A}|}{|\overline{a}\cap A|+|a \cap \overline{A}|+|a \cap A|} $$ so $d_J(a,A)=\psi_{d_J}(|a\cap A|,|A|)$ where $\psi_{d_J}(t,j)=1-\frac{t}{1-t+j}$.
$d_J(a,A)$ takes (again) its minimal value 0 for $A = \{a\}$ and its maximal value $1$ for $A =\Alt \setminus \{a\}$.
\end{example}

\section{Anonymous Noise and Size-decreasing Approval Rules}\label{sec: anonymous}
In this section, we suppose that voters share a common (unknown) noise parameter $\phi \in (0,1)$ and that there exists some function $d:\Alt \times \mathcal{P}(\Alt) \mapsto \mathds{R}$ and its associated function $\psi_d$ such that, for any $a\in \Alt$:
$$P_{\phi,d}( A_i | a^*=a) =\frac{1}{\beta} \phi^{d(a^* , A_i)}= \frac{1}{\beta} \phi^{\psi_d(|a^* \cap A_i|,|A_i|)} $$


After formally defining the notion of size-decreasing rules, we state the main result of this section which characterizes all the Mallows anonymous noises (that is, all functions $d$) whose associated maximum likelihood rule is size-decreasing. We will see that this is the case for some natural functions $d$, that we will test later on in the experiments.

\begin{definition}[Size Approval Rule]
Consider a function
$$
    \begin{array}{cccl}
        v :& \mathcal{P}(\Alt)^n &\longrightarrow &\Alt  \\
         &  (A_1,\dots,A_n) &\mapsto & v(A_1,\dots,A_n)
        \end{array}
$$
that, for each approval profile $A=(A_1,\dots,A_n)$, assigns a winning alternative $v(A_1,\dots,A_n) \in \Alt$. We say that $v$ is a size approval rule
 if there exists a vector $w=(w_0,\dots,w_m) \in \mathds{R}^{m+1}$ such that:
$$v(A_1,\dots,A_n) = \argmax_{a \in \Alt} app_w(a) $$
where $app_w$ is the weighted approval score defined by:
$$app_w(a) = \sum_{i: a\in A_i} w_{|A_i|} $$ 

A size approval rule $v$ is {\em size-decreasing}
if its associated vector $w = (w_0, \dots, w_m) \in \mathds{R}^{m+1}$ is such that $w_j>w_{j+1}$ for all $1\leq j\leq m-2$.
\end{definition}




\begin{example}
The size approval rule associated to the vector of weights given by $w_j=n^{m-j}$
is size-decreasing in the most extreme sense, as it is 
{\em lexicographic}: it outputs the alternative which appears most often in singleton ballots,  in case of ties it considers ballots of size $2$ and so on.
\end{example}

\begin{definition}[Maximum Likelihood Estimation Rule]
We define the function: 
$$
    \begin{array}{cccl}
        \zeta_d :& \mathcal{P}(\Alt)^n &\longrightarrow &\Alt  \\
         &  (A_1,\dots,A_n) &\mapsto & \argmax\limits_{a \in \Alt} P_d(A_1,\dots,A_n|a^*=a)
        \end{array}
$$
which given an approval profile outputs the maximum likelihood estimator of the ground truth alternative.
\end{definition}


The next theorem aims to characterize the functions $d$ for which the maximum likelihood estimation rule $\zeta_d$ is a size-decreasing approval rule.

\begin{theorem}\label{charac_d}
For $n \geq 3$,
the maximum likelihood estimation rule $\zeta_d$ is a size-decreasing approval rule if and only if $\Delta \psi_d: j \mapsto \psi_d(0,j)-\psi_d(1,j)$ is decreasing. 
\end{theorem}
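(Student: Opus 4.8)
The plan is to first reduce $\zeta_d$ to an explicit size approval rule, and then read off the monotonicity condition. Since the ballots are independent and (in the anonymous case) share the same $\phi$ and $\beta$, taking logarithms gives
$$\log P_d(A_1,\dots,A_n\mid a^*=a) = -n\log\beta + \log\phi\sum_{i} \psi_d(|a\cap A_i|,|A_i|),$$
where the first term does not depend on $a$. As $\phi\in(0,1)$ we have $\log\phi<0$, so maximizing the likelihood is the same as minimizing $\sum_i \psi_d(|a\cap A_i|,|A_i|)$. Splitting this sum according to whether $a\in A_i$ and subtracting the $a$-independent quantity $\sum_i\psi_d(0,|A_i|)$, I would obtain $\zeta_d(A)=\argmax_{a}\sum_{i:a\in A_i}\bigl(\psi_d(0,|A_i|)-\psi_d(1,|A_i|)\bigr)$. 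In other words $\zeta_d$ is \emph{exactly} the size approval rule $\argmax_a app_w(a)$ with canonical weights $w_j=\Delta\psi_d(j)$; only the sizes $1\le j\le m-1$ ever matter, since every ballot satisfies $\emptyset\neq A_i\neq\Alt$.

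Given this reduction the direction $(\Leftarrow)$ is immediate: if $\Delta\psi_d$ is decreasing then $w=(\Delta\psi_d(j))_j$ is itself a strictly decreasing weight vector realizing $\zeta_d$, so $\zeta_d$ is size-decreasing by definition.

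For $(\Rightarrow)$ I would argue by contraposition, using that ``$\zeta_d$ is size-decreasing'' means \emph{some} strictly decreasing $w'$ also realizes it. Suppose $\Delta\psi_d$ is not decreasing, so $\Delta\psi_d(j)\le\Delta\psi_d(j+1)$ for some $1\le j\le m-2$. The key step is to build a single profile $P$ on which the winner under the canonical weights $w$ and under $w'$ must disagree. I would single out one alternative $x$ and design $P$ so that all other alternatives play a perfectly symmetric role, with the score gap $app_u(x)-app_u(c)$ equal to a fixed positive multiple of $u_j-u_{j+1}$ for every $c\neq x$ and \emph{every} weight vector $u$. Concretely, take $\binom{m-2}{j}$ copies of all size-$j$ ballots containing $x$ together with $\binom{m-2}{j-1}$ copies of all size-$(j+1)$ ballots avoiding $x$; a short Pascal-identity computation (the gap equals $\alpha\binom{m-2}{j-1}u_j-\gamma\binom{m-2}{j}u_{j+1}$, and these multiplicities equalize the two coefficients) yields $app_u(x)-app_u(c)=C\,(u_j-u_{j+1})$ with $C=\binom{m-2}{j-1}\binom{m-2}{j}>0$. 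On $P$ we then have $app_w(x)-app_w(c)=C\bigl(\Delta\psi_d(j)-\Delta\psi_d(j+1)\bigr)\le 0$, so $x$ is not the unique winner under $w$, whereas $app_{w'}(x)-app_{w'}(c)=C(w'_j-w'_{j+1})>0$ makes $x$ the unique winner under $w'$. The two realizations of $\zeta_d$ disagree, a contradiction, and hence $\Delta\psi_d$ is strictly decreasing.

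The main obstacle is precisely this construction. Two points make it delicate. First, the naive choice of two \emph{disjoint} ballots of sizes $j$ and $j+1$ only exists when $2j+1\le m$ and otherwise forces overlaps; the symmetric averaging over all size-$j$ ballots through $x$ and all size-$(j+1)$ ballots off $x$ is what lets the argument succeed for every $j\le m-2$, including the large sizes. Second, because the canonical weights $\Delta\psi_d(j)$ may have arbitrary signs, one cannot simply assume that the ``interesting'' alternatives sit at the top of the score ranking; making every alternative other than $x$ symmetric removes this difficulty, since the comparison then collapses to the single scalar $u_j-u_{j+1}$ with no uncontrolled third alternatives and no sign hypotheses. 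The assumptions $m\ge 3$ (so that a pair $j,j+1$ with $1\le j\le m-2$ exists at all) and $n\ge 3$ (enough voters to host $P$, padding if needed with a permutation-symmetric block of ballots that shifts all scores equally and so leaves $\argmax$ unchanged) are used exactly here.
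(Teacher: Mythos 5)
Your reduction of $\zeta_d$ to the size approval rule with canonical weights $w_j=\Delta\psi_d(j)$ and your $(\Leftarrow)$ direction coincide with the paper's. For $(\Rightarrow)$ your symmetrized construction is genuinely different from the paper's hand-built profiles, and its algebra is correct: the gap $app_u(x)-app_u(c)=\binom{m-2}{j-1}\binom{m-2}{j}\left(u_j-u_{j+1}\right)$ does hold for every weight vector $u$, and making all alternatives $c\neq x$ interchangeable really does dispose of the sign issues that the paper's own profiles handle somewhat loosely (there, $a$ beating $c$ and the $e_l$'s implicitly relies on favorable signs of the weights).

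The genuine gap is the number of voters. The theorem concerns $\zeta_d:\mathcal{P}(\Alt)^n\rightarrow\Alt$ for each fixed $n\geq 3$, so any profile witnessing a disagreement between the canonical weights and the strictly decreasing weights $w'$ must have exactly $n$ ballots; the hypothesis ``$\zeta_d$ is size-decreasing'' says nothing about electorates of other sizes. Your profile $P$ has $\binom{m-2}{j}\binom{m-1}{j-1}+\binom{m-2}{j-1}\binom{m-1}{j+1}$ ballots, which in general far exceeds $n$: already for $n=3$, $m=4$, $j=1$ it needs $5$ voters. The padding remark does not repair this: padding only \emph{increases} the profile, so it is useless when $n<|P|$, and even when $n>|P|$ the permutation-symmetric blocks have sizes of the form $\binom{m}{k}$, so not every value of $n-|P|$ is reachable (for $m=3$, $j=1$ you have $|P|=2$ and blocks of size $3$, so $n=3$ itself is unreachable). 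This constraint is exactly what shapes the paper's construction: one ballot of size $j$, one of size $j+1$, and $n-2$ copies of $\{a,b\}$, which scales to every $n\geq 3$ -- at the price of the overlapping $e_l$'s that you correctly identified as unavoidable when $2j+1>m$. As written, your argument establishes the theorem only for those special values of $n$ that your construction (plus padding) can realize, not for every $n\geq 3$; fixing it would essentially force you back to profiles of the paper's type.
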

\begin{proof}
First, for any approval profile $A=(A_1,\dots,A_n)$,
\begin{small}
\begin{align*}
    \zeta_d(A) & = \argmax_{a \in \Alt} \prod_{i=1}^n \frac{1}{\beta} \phi^{d(a,A_i)}
     = \argmin_{a \in \Alt} \sum_{i=1}^n d(a,A_i)\\
     &= \argmin_{a \in \Alt} \sum_{i=1}^n \psi_d(|a \cap A_i|,|A_i|)\\
    & = \argmin_{a \in \Alt} \sum_{i:a \in A_i} \psi_d(1,|A_i|)+\sum_{i:a \notin A_i} \psi_d(0,|A_i|)\\
    & = \argmin_{a \in \Alt} \underbrace{\sum_{i=1}^n \psi_d(0,|A_i|)}_{Cste}-\underbrace{\sum_{i:a \in A_i} \psi_d(0,|A_i|)-\psi_d(1,|A_i|)}_{\Delta \psi_d(|A_i|)}\\
    & = \argmax_{a \in \Alt} \sum_{i:a \in A_i} \Delta \psi_d(|A_i|)
\end{align*}
\end{small}
\noindent$\impliedby$: If $\Delta \psi_d$ is decreasing then we can immediately prove that $\zeta_d$ is a size-decreasing approval rule with a weight vector $w$ such that $w_j = \Delta \psi_d(j)$ for any $1\leq j \leq m-1$.

\noindent$\implies$: Suppose that $\zeta_d$ is a size-decreasing approval rule. Thus, there exists a weight vector $w = (w_0, \dots, w_m) \in \mathds{R}^{m+1}$ such that $w_j > w_{j+1}$ for $1\leq j\leq m-2$ and for any approval profile $A=(A_1,\dots,A_n)$ we have: $$\zeta_d(A)=\argmax_{a \in \Alt} \sum_{i:a \in A_i} w_{|A_i|}$$
If $m=3$:
Let $\Alt=\{a,b,c\}$. To prove that $\Delta \psi_d(1)> \Delta \psi_d(2)$ consider the following approval profile:
$$
\left\{
    \begin{array}{cll}
        A_1 & = \{a\} & \\
        A_i & = \{a,b\} & ,\forall i \in \{2, \dots, n-1\}\\
        A_n & =\{b,c\}
        \end{array}
\right.
$$
Which yields the following weighted approval scores:
$$
\left\{
    \begin{array}{cl}
        app_w(a) & = w_1 + (n-2) w_2  \\
        app_w(b) & = (n-1) w_2\\
        app_w(c) & = w_2
        \end{array}
\right.
$$
Since $w_1>w_2$ we have $\zeta_d(A)=a$, which implies that $\argmax_{e \in \Alt} \sum_{i:e \in A_i} \Delta \psi_d(|A_i|)=a$. In particular:
$$\sum_{i:a \in A_i} \Delta \psi_d(|A_i|) > \sum_{i:b \in A_i} \Delta \psi_d(|A_i|)  $$
So\\
\centerline{$\Delta \psi_d(1) +(n-2)\Delta \psi_d(2) > (n-1)\Delta \psi_d(2)$}\\
which implies that $\Delta \psi_d(1) > \Delta \psi_d(2)$.

If $m>3$: Let $\Alt=\{a,b,c,e_1,\dots,e_{m-3}\}$. To prove that  $\Delta \psi_d(1)> \Delta \psi_d(2)$ we use the same approval profile as above. To prove that $\Delta \psi_d(j)> \Delta \psi_d(j+1)$ for $j\geq 2$, consider the following approval profile:
$$
\left\{
    \begin{array}{cll}
        A_1 & = \{a,e_1,\dots,e_{j-1}\} & \\
        A_i & = \{a,b\} & ,\forall i \in \{2, \dots, n-1\}\\
        A_n & =\{b,c,e_1,\dots,e_{j-1}\}
        \end{array}
\right.
$$
Which yields the following weighted approval scores:
$$
\left\{
    \begin{array}{cll}
        app_w(a) & = w_j \mbox{~~~~}+ (n-2) w_2 &  \\
        app_w(b) & = w_{j+1} + (n-2) w_2 &\\
        app_w(c) & = w_{j+1}  & \\
        app_w(e_l) & = w_j + w_{j+1} & \forall l \in \{1,\dots,j-1\}
        \end{array}
\right.
$$
Since $w_2>w_j>w_{j+1}$ we have $\zeta_d(A)=a$, which implies that $\argmax_{e \in \Alt} \sum_{i:e \in A_i} \Delta \psi_d(|A_i|)=a$. In particular, we have
$$\sum_{i:a \in A_i} \Delta \psi_d(|A_i|) > \sum_{i:b \in A_i} \Delta \psi_d(|A_i|)  $$
So:
$$\Delta \psi_d(j) +(n-2)\Delta \psi_d(2) > \Delta \psi_d(j+1)+(n-2)\Delta \psi_d(2) $$
which implies that: $$\Delta \psi_d(j) > \Delta \psi_d(j+1)$$
\end{proof}
\paragraph{Interpretation:} Consider an anonymous noise $P_{\phi,d}$, where $d$ is such that $\Delta \psi_d$ is decreasing.  Now consider any alternative $a \in \Alt$, and for any $k \in [1,m-2]$, let $A_k,A_{k+1},B_k,B_{k+1}$ be four sets such that $a \in A_k \cap A_{k+1}$ and $a \notin B_k \cup B_{k+1}$ and $|A_k|=|B_k|=k $ and $|A_{k+1}|=|B_{k+1}|=k+1 $. We can easily check that since $\phi \in (0,1)$ and $d(a,B_k)-d(a,A_{k})<d(a,B_{k+1})-d(a,A_{k+1})$, we have 
$$\frac{P_{\phi,d}(B_k|a^*=a)}{P_{\phi,d}(A_k|a^*=a)} < \frac{P_{\phi,d}(B_{k+1}|a^*=a)}{P_{\phi,d}(A_{k+1}|a^*=a)} $$
which implies the following:
\begin{compactitem}
    \item If it is more likely that a voter casts a $k$-sized ballot not containing the ground truth than a $k$-sized ballot  that contains it, then it is even more likely that she casts a $(k+1)$-sized ballot not containing the ground truth than a $(k+1)$-sized ballot  that contains it.
    \item If it is more likely that a voter casts a $(k+1)$-sized ballot containing the ground truth than a $(k+1)$-sized ballot  that does not, then it is even more likely that she casts a $k$-sized ballot containing the ground truth than a $k$-sized ballot  that does not.
\end{compactitem}

We now give some examples with some usual functions $d$. We will see that the maximum likelihood estimation rule associated to the Jaccard distance is size-decreasing with weights $w_{|A|}=\frac{1}{|A|}$, and that the maximum likelihood estimation rule associated to the Hamming distance is not size-decreasing.
\begin{example}\label{ex: Jacc}
Consider the Jaccard distance given by:
$$d_J(a,A)=\psi_{d_J}(|a\cap A|,|A|)=1-\frac{|a \cap A|}{|A|-|a \cap A|+1} $$
which gives:
$$\Delta \psi_{d_J}(j) = \psi_{d_J}(0,j)-\psi_{d_J}(1,j)=\nicefrac{1}{j}  $$
By Theorem \ref{charac_d}, we conclude that the maximum likelihood estimation rule $\zeta_{d_J}$ is a size-decreasing approval rule with weights
$w_j = \nicefrac{1}{j}$.
\end{example}

\begin{example}
Consider the Hamming distance given by:
$$d_H(a,A)= |a \cap \overline{A}|+|\overline{a}\cap A|= 1+|A|-2|a\cap A| $$
Which gives us that:
$$\Delta \psi_{d_H}(j) = \psi_{d_H}(0,j)-\psi_{d_H}(1,j)=2  $$
Therefore, the maximum likelihood estimation rule $\zeta_{d_{H}}$ is a size approval rule with constant weights: it is the {\em standard approval rule} (SAV), that selects the alternative with the maximum number of approvals. It can be seen immediately that SAV is not size-decreasing; however, 
it is, so to say, size-non-increasing, and thus can be seen as the limit of size-decreasing rules.
\end{example}

As a consequence of Theorem \ref{charac_d}, we can easily prove that, for an anonymous 
noise, the maximum likelihood estimation rule associated to a function $d$  defined as a linear combination of the quantities $|a\cap A|$ and $|A|$ is not a size-decreasing rule (this is the case for the Hamming distance). More generally, this applies to any function $d$ such that $\psi_d$ can be additively separated into two terms $\psi_d(|a\cap A|,|A|)=f(|a\cap A|)+g(|A|)$. In the next section, we will consider this particular family of separable functions with a non-anonymous 
noise, where each voter has her own noise parameter $\phi_i$.

\section{Non-anonymous Separable Noise }\label{sec: non-anonymous}
\subsection{The General Case}
Consider a set of $m$ alternatives $\Alt=\{a_1, \dots,a_m\}$ and a ground truth answer $a^*$.
 Consider also a neutral function $d:\Alt\times \mathcal{P}(\Alt)\longrightarrow \mathds{R}$ with an associated function $\psi_d:\{0,1\} \times \{0,\dots,m\} \setminus (1,0)\mapsto \mathds{R}$ which can be separated into two quantities:
 $$\psi_d(|a\cap A|,|A|) = f(|a\cap A|)+g(|A|) $$
 We define a non-anonymous Mallows noise model, where for each voter $i \in N$ there exists a parameter $\phi_i \in ]0,+\infty[$ such that, for any $a \in \Alt$:\footnote{Recall that voters cannot cast empty or full approval ballots. Therefore we suppose that $P(\emptyset|a^*=a)=P(\Alt|a^*=a)=0$.}
$$P_{\phi_i,d}(A_i|a^*=a) = \frac{1}{\beta_i} \phi_i^{f(|a^* \cap A_i|)+g(|A_i|)}$$
Notice that in this case, a bigger individual noise parameter $\phi_i$ models a less reliable voter (her distribution is less condensed around the ground truth). The aim of the next result is to motivate the use of size-decreasing approval rules to aggregate approvals generated from such distributions. More precisely, the goal is to find sufficient conditions on $f$ and $g$ that makes the expected size of the voter's ballot $\mathds{E}_{\phi,d}[|A_i|]$ grows as the voter become less reliable (\textit{i.e.} as her noise parameter $\phi_i$ grows).

We will denote $f(1)=f_1, f(0)=f_0$ and $\Delta f=f_0 - f_1$ which would naturally be positive $\Delta f>0$. We will also denote by $\Delta g_{k,t}=g(k)-g(t)$.

\begin{theorem}
If for every $1\leq t < k \leq m-1$ we have that:
$$\Delta g_{k,t} \geq \frac{k-t}{2} \Delta f $$
Then:
$$\frac{\partial \mathds{E}_{\phi,d}[|A_i|]}{\partial \phi} \geq 0$$
\end{theorem}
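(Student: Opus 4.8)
The plan is to push everything through the induced distribution on ballot \emph{sizes} and then recognise a monotone-likelihood-ratio (log-supermodularity) property in $\phi$. Fix one voter and drop the subscript $i$. By neutrality I may take the ground truth to be a fixed $a^*$; a ballot of size $j\in\{1,\dots,m-1\}$ either contains $a^*$ (there are $\binom{m-1}{j-1}$ such sets, each of weight $\phi^{f_1+g(j)}$) or it does not (there are $\binom{m-1}{j}$ such sets, each of weight $\phi^{f_0+g(j)}$). Writing
\[
q_j=\binom{m-1}{j-1}\phi^{\,f_1+g(j)}+\binom{m-1}{j}\phi^{\,f_0+g(j)},\qquad \beta=\sum_{j=1}^{m-1}q_j,
\]
the size distribution is $P(|A_i|=j)=q_j/\beta$, so $\mathds{E}_{\phi,d}[|A_i|]=\big(\sum_j j\,q_j\big)/\big(\sum_j q_j\big)$, and each $q_j>0$.

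The first step is to differentiate this ratio and symmetrise the double sum. With $\ell_j=q_j'/q_j=\frac{d}{d\phi}\log q_j$, the quotient rule followed by the relabelling $j\leftrightarrow k$ gives
\[
\frac{\partial \mathds{E}_{\phi,d}[|A_i|]}{\partial\phi}=\frac{1}{2\beta^2}\sum_{j,k}(j-k)\,(\ell_j-\ell_k)\,q_j q_k .
\]
Since the $q_j$ are positive, it suffices to show $\ell_j$ is \emph{nondecreasing} in $j$, as then every summand is nonnegative. Factoring $\phi^{g(j)}$ out of $q_j$ and computing the logarithmic derivative yields
\[
\phi\,\ell_j=g(j)+f_0-\lambda_j\,\Delta f,\qquad \lambda_j=\frac{\binom{m-1}{j-1}\phi^{f_1}}{\binom{m-1}{j-1}\phi^{f_1}+\binom{m-1}{j}\phi^{f_0}}\in(0,1),
\]
so monotonicity of $\ell_j$ (it is enough to treat consecutive indices) reduces to $\Delta g_{j+1,j}\ge(\lambda_{j+1}-\lambda_j)\Delta f$ for $1\le j\le m-2$.

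Now the hypothesis enters: it gives $\Delta g_{j+1,j}\ge\tfrac12\Delta f$, so the theorem follows once I prove the purely combinatorial bound $\lambda_{j+1}-\lambda_j\le\tfrac12$, which I expect to be the real obstacle. The plan there is to set $c=\phi^{\Delta f}>0$ and use $\binom{m-1}{j}/\binom{m-1}{j-1}=(m-j)/j$ to write $\lambda_j=\big(1+\tfrac{m-j}{j}c\big)^{-1}$; a short calculation collapses the consecutive difference to
\[
\lambda_{j+1}-\lambda_j=\frac{c\,m}{\big(j+(m-j)c\big)\big((j+1)+(m-1-j)c\big)},
\]
so that $\lambda_{j+1}-\lambda_j\le\tfrac12$ is equivalent to nonnegativity, for all $c>0$, of the quadratic
\[
(m-j)(m-1-j)\,c^2+\big(2j(m-j)-2j-m\big)\,c+j(j+1).
\]
Its leading and constant coefficients are positive, so I only need its discriminant to be nonpositive. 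Substituting $u=j$, $v=m-1-j$ (so $m=u+v+1$ and $u,v\ge1$), the discriminant simplifies to $-m\,(8uv-m)$, and since $8uv\ge4u+4v\ge u+v+1=m$ it is nonpositive, which closes the argument. The only delicate points are the bookkeeping in the symmetrisation and verifying that the closed form for $\lambda_{j+1}-\lambda_j$ and the discriminant are exact; the rest is routine.
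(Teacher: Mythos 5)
Your proof is correct, and I checked its two computational claims: the closed form $\lambda_{j+1}-\lambda_j=\frac{cm}{(j+(m-j)c)\,((j+1)+(m-1-j)c)}$ and the discriminant identity, with $u=j$, $v=m-1-j$, $B^2-4AC=(2uv-u-v-1)^2-4uv(u+1)(v+1)=-m(8uv-m)$, are both exact. The first half of your argument coincides with the paper's: your $q_j$ are exactly the paper's $h_t$, and your symmetrised derivative together with the condition ``$\ell_j=q_j'/q_j$ nondecreasing'' is precisely the paper's reduction to $h_k'h_t-h_t'h_k\ge 0$ for $t<k$. Where you genuinely diverge is in how that pairwise inequality is established. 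The paper attacks arbitrary pairs $t<k$: it expands $h_k'h_t-h_t'h_k$, lower-bounds the $\phi$-dependent term $r(\phi)=\phi^{-\Delta f}+\tfrac{(m-k)(m-t)}{kt}\phi^{\Delta f}$ by an AM--GM argument, and then proves the surd inequality $\frac{\sqrt{k(m-t)}-\sqrt{t(m-k)}}{\sqrt{k(m-t)}+\sqrt{t(m-k)}}\le\frac{k-t}{2}$. You instead observe that monotonicity of $\ell_j$ only needs to be checked on consecutive sizes, compute $\lambda_{j+1}-\lambda_j$ in closed form in $c=\phi^{\Delta f}$, and verify $\lambda_{j+1}-\lambda_j\le\tfrac12$ uniformly in $c>0$ by a discriminant check. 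Your route is more elementary (no square roots, no AM--GM slack) and makes transparent that the hypothesis is really only needed for consecutive sizes, $\Delta g_{j+1,j}\ge\tfrac12\Delta f$, the condition for general pairs being equivalent to it by telescoping; the paper's route, at the cost of the extra surd lemma, handles all pairs $(t,k)$ at once and shows directly where the constant $\frac{k-t}{2}$ comes from. One small point to make explicit: when you pass from $\lambda_{j+1}-\lambda_j\le\tfrac12$ to $(\lambda_{j+1}-\lambda_j)\Delta f\le\Delta g_{j+1,j}$ you multiply an inequality by $\Delta f$, so you are invoking the paper's standing assumption $\Delta f>0$; your write-up uses it silently and should state it.
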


\begin{proof}
We will just give a sketch of the proof. The full version is included in the appendix.
For any prior distribution on $a^*$, it suffices to prove that for any $a \in \Alt$ we have that: $$ \frac{\partial \mathds{E}_{\phi,d}[|A_i||a^*=a]}{\partial \phi} \geq 0$$
since $\mathds{E}_{\phi,d}[|A_i|] = \sum\limits_{a\in \Alt}  \mathds{E}_{\phi,d}[|A_i|| a^*=a] P(a^*=a)  $.

\allowdisplaybreaks
Let $a \in \Alt$. First we can compute the normalization factor:
$$\beta = \sum_{t=1}^{m-1} {m-1 \choose t-1} \phi^{f_1+g(t)}+ {m-1 \choose t} \phi^{f_0+g(t)} = \sum_{t=1}^{m-1} h_t(\phi) $$
where $h_t(\phi) = {m-1 \choose t-1} \phi^{f_1+g(t)}+ {m-1 \choose t} \phi^{f_0+g(t)} $.
So:
\begin{small}
$$\mathds{E}_{\phi,d}[|A_i| | a^*=a] = \sum_{t=1}^{m-1} t P(|A_i|=t|a^*=a)= \frac{\sum\limits_{t=1}^{m-1} t\times h_t(\phi)}{\sum\limits_{t=1}^{m-1} h_t(\phi)}$$
\end{small}
Thus the derivative reads:
\begin{small}
\begin{align*}
\frac{\partial \mathds{E}_{\phi,d}[|A_i||a]}{\partial \phi} & = \frac{\sum\limits_{t=1}^{m-1} t h'_t(\phi)\sum\limits_{k=1}^{m-1}
    h_k(\phi)-\sum\limits_{t=1}^{m-1} t h_t(\phi)\sum\limits_{k=1}^{m-1} h'_k(\phi)}{\left( \sum\limits_{t=1}^{m-1}  h_t(\phi)\right)^2}\\
    & \propto \sum_{t=1}^{m-2}\sum_{k=t+1}^{m-1} (k-t) \left[h'_k(\phi) h_t(\phi)-h'_t(\phi)h_k(\phi) \right]
\end{align*}
\end{small}
We can already notice that to guarantee that $\frac{\partial \mathds{E}_{\phi,d}[|A_i|| a^*=a]}{\partial \phi} \geq 0$ it suffices that $\Delta h_{k,t}(\phi)=\left[h'_k(\phi) h_t(\phi)-h'_t(\phi)h_k(\phi) \right] \geq 0$ for all $1 \leq t < k\leq m-1$.

We have that:
\begin{align*}
    h'_k(\phi) h_t(\phi) & = \left[(g(k)+f_1){m-1 \choose k-1} \phi^{g(k)+f_1-1}\right.\\
    &+ \left.(g(k)+f_0){m-1 \choose k} \phi^{g(k)+f_0-1} \right] \\
    & \times \left[ {m-1 \choose t-1} \phi^{f_1+g(t)}+ {m-1 \choose t} \phi^{f_0+g(t)} \right]\\
\end{align*}
So we can check that:
\begin{small}
$$\Delta h_{k,t}(\phi) \propto \left[r(\phi)+\frac{m-k}{k}+ \frac{m-t}{t}\right] \Delta g_{k,t} - \left(\frac{m-t}{t}- \frac{m-k}{k}\right) \Delta f $$
\end{small}
where
$ r(\phi) =  \phi^{-\Delta f}+\frac{m-k}{k} \frac{m-t}{t} \phi^{\Delta f}$.
By studying the derivative of $r$ we get that: 
$$r(\phi) \geq 2\sqrt{\frac{m-k}{k}}\sqrt{\frac{m-t}{t}} $$
So in order to have $\Delta h_{k,t}(\phi)\geq 0 $ it suffice that: 
\allowdisplaybreaks
\begin{small}
$$  \left[\sqrt{\frac{m-k}{k}}+\sqrt{\frac{m-t}{t}}\right]^2 \Delta g_{k,t} - \left(\frac{m-t}{t}- \frac{m-k}{k}\right) \Delta f \geq 0 $$
\end{small}
which is equivalent to:
\begin{small}
 $$\left[\sqrt{\frac{m-k}{k}}+\sqrt{\frac{m-t}{t}}\right]\Delta g_{k,t}- \left[\sqrt{\frac{m-t}{t}}-\sqrt{\frac{m-k}{k}}\right] \Delta f
 \geq 0$$\end{small}
We prove that $\frac{\sqrt{k(m-t)}-\sqrt{t(m-k)}}{\sqrt{k(m-t)}+\sqrt{t(m-k)}} \leq \frac{k-t}{2}$
to recover the sufficient condition.
\end{proof}

\begin{example}
For $\alpha,\beta>0$ define the distance:
\begin{align*}
    d_{\alpha,\beta}(a,A)& = \alpha |\overline{a} \cap A|+ \beta |a \cap \overline{A}|\\
    & =-(\alpha+\beta)|a \cap A| + \beta+\alpha |A| 
\end{align*}
which generalizes the Hamming distance in the same way the Tversky index \cite{tversky1977} generalizes Jaccard's. 
$d_{\alpha,\beta}$ is associated to the separable function:
$$\psi_{d_{\alpha,\beta}}(x,k)= -(\alpha+\beta)x + \beta+\alpha k = f(x)+g(k) $$
where $f(x)=-(\alpha+\beta)x + \beta$ and $g(k)=\alpha k$.
We have:
$$\Delta g_{k,t}=\frac{\alpha}{\alpha+\beta}(k-t) \Delta f$$
So for every $d_{\alpha,\beta}$ such that $\alpha \geq \beta$ we have that:
$$ \frac{\partial \mathds{E}_{\phi,d_{\alpha,\beta}}[|A_i|]}{\partial \phi} \geq 0 $$
\end{example}

\subsection{The Hamming Distance Case - Condorcet Noise Model}
The prototypical example of a separable noise is the noise associated to the Hamming distance, which is equivalent to the Condorcet-like noise model. We will prove that for this specific noise, we can express the expected size of a voter's ballot $\mathds{E}[|A_i|]$ as a linear function of her reliability parameter. This enables us to estimate this parameter directly from the actual size of the ballot, without any prior belief about the ground truth.

Formally, consider the Condorcet noise model where for each voter $i\in N$ there exists a noise parameter $p_i \in (0,1)$ 
such that:
$$P_{p_i}(a\in A_i|a=a^*)= P_{p_i}(a\notin A_i| a\neq a^*)=p_i, \forall a \in \Alt $$
and where the belonging or not of different alternatives to the voter's ballot are independent events. 
Notice that the model supposes equal error-rates for false positives and false negatives. In particular, voters who select many alternatives would \textit{ipso facto} have a low $p_i$ (since their ballots contain many false positives) which can even be below $0.5$. In fact, constraining $p_i$ to be in $(0.5,1)$ worsened the performance of our method in the experiments.
Moreover, we can easily prove that in this case, the noise model is a non-anonymous Mallows noise model with the Hamming distance and with $\phi_i= \frac{1-p_i}{p_i}$ 
(We can have $\phi_i\geq1$ since $p_i$ can be below $0.5$):
$$ P_{p_i}(A_i|a=a^*)= p_i^m \left(\frac{1-p_i}{p_i}\right)^{d_H(a^*,A_i)}, \forall a \in \Alt$$
We will show that in this particular case, we can give an explicit formula of the expected size of a voter's approval ballot as a linear function of her precision parameter $p_i$.
\begin{theorem}\label{linear_exp}
For $m\geq 2$, we have that:
$$\mathds{E}_p[|A_i|]=(m-1)-(m-2)p$$
\end{theorem}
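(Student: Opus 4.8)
The plan is to avoid any brute-force computation over ballot sizes (which would reintroduce the binomial coefficients appearing in the previous proof) and instead exploit linearity of expectation through indicator variables. First I would write the ballot size as a sum of indicators over all alternatives,
$$|A_i| = \sum_{a \in \Alt} \mathds{1}[a \in A_i],$$
so that, for any fixed ground truth $a_0$, linearity of expectation gives $\mathds{E}_p[|A_i| \mid a^* = a_0] = \sum_{a \in \Alt} P_p(a \in A_i \mid a^* = a_0)$. The advantage of this move is that it never requires enumerating ballots or tracking their exact sizes: each inclusion event contributes only its marginal probability, and the assumed independence of the inclusion events is not even needed, only the marginals.

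Next I would read these marginals off directly from the definition of the Condorcet noise. For the ground truth itself, the defining identity gives $P_p(a_0 \in A_i \mid a^* = a_0) = p$. For each of the remaining $m-1$ alternatives $a \neq a_0$, the identity $P_p(a \notin A_i \mid a \neq a^*) = p$ yields $P_p(a \in A_i \mid a^* = a_0) = 1 - p$. Summing the contributions gives
$$\mathds{E}_p[|A_i| \mid a^* = a_0] = p + (m-1)(1-p) = (m-1) - (m-2)p,$$
which is exactly the claimed formula after expanding.

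Finally I would observe that this conditional value does not depend on which alternative $a_0$ is the ground truth, by the neutrality built into the model, so averaging over any prior on $a^*$ leaves it unchanged and produces the unconditional $\mathds{E}_p[|A_i|]$. This explains why the statement can be phrased without reference to a prior.

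Since the argument reduces to linearity of expectation plus two marginal probabilities, I do not anticipate a genuine obstacle. The only point that demands care is the domain restriction $A_i \neq \emptyset$ and $A_i \neq \Alt$: the clean linear formula is the expectation under the unconditioned product-of-Bernoullis model, whereas conditioning on non-degenerate ballots would break the independence used above and destroy the closed form. I would therefore state explicitly that the expectation is taken over the independent-inclusion model as literally defined by the two probabilities $p$ and $1-p$, which is precisely what validates the indicator decomposition.
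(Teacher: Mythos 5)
Your proof is correct and takes essentially the same route as the paper's own: the identical indicator decomposition $|A_i| = \sum_{b \in \Alt} \mathds{1}\{b \in A_i\}$, linearity of expectation with the marginals $p$ and $(1-p)$, and the same final averaging over the prior on $a^*$. Your closing caveat about the standing assumption $A_i \neq \emptyset$, $A_i \neq \Alt$ (under which conditioning would spoil the clean linear formula) is a genuine subtlety that the paper silently glosses over, but it does not alter the approach.
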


\begin{proof}
Let $a\in \Alt$:
\begin{small}
\begin{align*}
    \mathds{E}_{p}[|A_i||a^*=a] & = \mathds{E}[\sum_{b \in \Alt} \mathds{1}\{b \in A_i\}|a^*=a] \\
    & = \sum_{b \in \Alt} P(b \in A_i|a^*=a) \\
    & = P(a \in A_i|a^*=a) + \sum_{b \neq a} P(b \in A_i|a^*=a) \\
    & = p + (m-1)(1-p)\\
    & = (m-1)-(m-2)p
\end{align*}
\end{small}
Thus we have that:
\begin{align*}
    \mathds{E}_p[|A_i|] &= \sum_{a \in \Alt} \mathds{E}_p[|A_i||a^*=a] P(a^*=a)\\
    &= \sum_{a \in \Alt} \left[(m-1)-(m-2)p\right] P(a^*=a)\\
    &= (m-1)-(m-2)p   
\end{align*}

\end{proof}
Theorem \ref{linear_exp} 
gives us a simple approach to estimate $p_i$ by maximum likelihood estimations given some observations of $A_i$ without a need to know the ground truth $a^*$.

\section{Experiments}\label{sec: experiments}
We took the three image annotation datasets, originally collected in \cite{approval2015} for incentive-design purposes\footnote{Accessible on the author's webpage: \url{https://cs.cmu.edu/~nihars/data/data_approval.zip}}, and used them to test our hypothesis and to assess the accuracy of different aggregation rules of interest. 

Each dataset consists of a set of approval profiles of a number of voters (participants) who had to select \emph{all the alternatives that they thought were correct} in a number of instances (images), namely:
\begin{compactitem}
    \item Animal task: $16$ images/questions and $6$ alternatives.
    \item Texture task: $16$ images/questions and $6$ alternatives.
    \item Language task: $25$ images/questions and $8$ alternatives.
\end{compactitem}

From now on, a dataset denotes the set $N$ of $n$ voters, $\Alt=\{a_1,\dots,a_m\}$ the set of alternatives, $L$ approval profiles $A^z=(A_1^z,\dots,A_n^z)$ each associated to an image $z$ with ground truth alternative $a_z^*$.

\subsection{Ballot Size and Reliability}
To test the hypothesis that smaller ballots are more reliable, we introduce the \emph{size-normalized accuracy} which is defined for each dataset and each $k \in [1,m-1]$ as:
$$\mbox{SNA}(k)=\frac{1}{k} \frac{|\left\{ A_i^z, |A_i^z|=k, a^z_* \in A_i^z\right\}|}{|\left\{A_i^z, |A_i^z|=k \right\}|} $$
It can be interpreted as the probability of recovering the ground truth after drawing randomly (uniformly) an alternative from a ballot of size $k$. Notice that if smaller ballots were not more reliable, one would expect that, for instance, ballots of size $2$ are twice more probable to contain the ground truth than ballots of size $1$, so the chance of finding the ground truth after randomly picking an alternative from a $2$-sized ballot is equal to the chance that a singleton ballot selects the ground truth. So we would expect that $\mbox{SNA}$ is almost constant for all $k$.

However when we compute the $\mbox{SNA}$ for the three datasets (see Figure \ref{fig: sna}) we can clearly see that it decreases for the bigger ballots,
which confirms that the alternatives selected in smaller approval ballots are more likely to coincide with the ground truth.
\begin{figure}[h]
     \centering
     \begin{subfigure}[b]{0.15\textwidth}
         \centering
         \includegraphics[width=\textwidth]{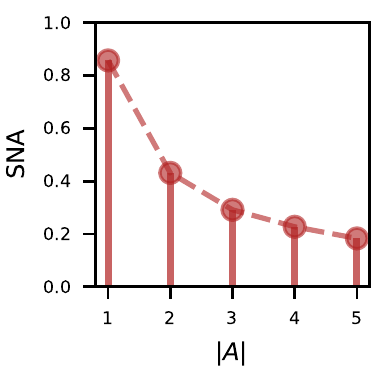}
         \subcaption{Animals}
          \label{fig: sna_animals}
     \end{subfigure}
     \begin{subfigure}[b]{0.15\textwidth}
         \centering
         \includegraphics[width=\textwidth]{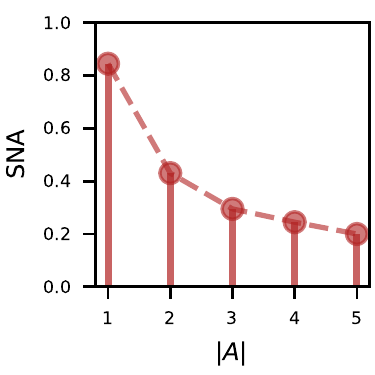}
         \subcaption{Textures}
          \label{fig: sna_textures}
     \end{subfigure}
     \begin{subfigure}[b]{0.15\textwidth}
         \centering
         \includegraphics[width=\textwidth]{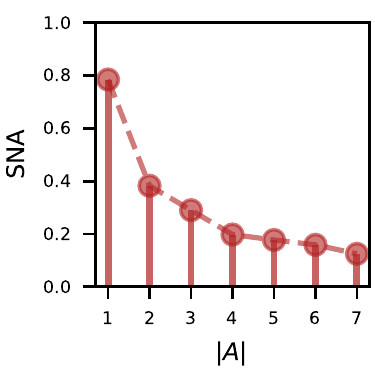}
         \subcaption{Languages}
         \label{fig: sna_languages}
     \end{subfigure}
        \caption{Size-normalized accuracy for three datasets}
        \label{fig: sna}
\end{figure}

\subsection{Aggregation}

Since we are mostly interested in the single-question wisdom of the crowd problem, we will only consider aggregation rules that operate question-wise (voters' answers on different questions do not affect the output of the rule for a given question).
We will use the following aggregation methods
(we include more methods in the Appendix)
\footnote{The code can be found at \url{https://github.com/taharallouche/Truth_Tracking-via-AV}}
:
\paragraph{Condorcet:} For each instance with approval profile $A^z=(A_1^z,\dots,A_n^z)$ and ground truth $a^*_z$, we suppose that each voter has a precision parameter $p_i^z$ such that:
$$P_{p_i^z}(a\in A_i^z|a=a^*_z)= P_{p_i^z}(a\notin A_i| a\neq a^*_z)=p_i^z, \forall a \in \Alt $$
and where the belonging or not of different alternatives to the voter's ballot are independent events.
We know that if these parameters were known, then the maximum likelihood estimation rule returns the alternative $\hat{a}_z$ such that:
$$\hat{a}_z= \argmax_{a \in \Alt} \sum_{\substack{i:a\in A_i^z }} \ln \frac{p_i^z}{1-p_i^z} $$
To estimate the parameters $p_i^z$, we will make use of the expression in Theorem \ref{linear_exp} that states that:
$$\mathds{E}_{p_i^z}[|A_i^z|]=(m-1)-(m-2)p_i^z $$ 
and set:
$$\hat{p}_i^z = \proj_{[\varepsilon,1-\varepsilon]} \frac{m-1-|A_i^z|}{m-2} $$
The projected quantity is simply the maximum likelihood estimation of $p_i^z$ with a single sample (the actual observation of the voter's ballot). We project it into a closed interval to avoid having $\hat{p}_i^z=1$ (which yields an infinite weight to the voter) whenever $|A_i^z|=1$, and $\hat{p}_i^z=0$ whenever $|A_i^z|=m-1$. So the aggregation rule finally outputs:
$$\hat{a}_z= \argmax_{a \in \Alt} \sum_{\substack{i:a\in A_i^z }} \ln \frac{\hat{p}_i^z}{1-\hat{p}_i^z} $$
which is size-decreasing.

\paragraph{Jaccard:} Here we suppose that for each instance $A^z=(A_1^z,\dots,A_n^z)$, the noise model is as follows:
$$P_{\phi,d_J}( A_i^z|a^*_z=a) =\frac{1}{\beta} \phi^{d_J(a^*_z , A_i^z)},\forall a \in \Alt$$
where $d_J(a,A)=1-\frac{|a \cap A|}{|A|+1-|a\cap A|}$.

We saw in Example \ref{ex: Jacc} that the maximum likelihood estimation rule $\zeta_{d_J}$ is a size approval rule with weights
$w_j = \nicefrac{1}{j} $.


\paragraph{Simple approval:} We will compare all these rules to the benchmark SAV rule where for each instance $A^z=(A_1^z,\dots,A_n^z)$:
$\hat{a}_z=\argmax_{a \in \Alt} |\left\{A_i^z, a \in A_i^z \right\}| $

\subsection{Results}
For each task, we took $25$ batches for each different number of voters, and applied the aforementioned rules. We measure the accuracy of each rule, outputing the esimates $\hat{a}^z $ for each instance, defined as
$$ \frac{1}{L} \sum_{z=1}^L \mathds{1}\{a^*_z = \hat{a}_z\} $$
The results are shown in figures \ref{fig: animals}, \ref{fig: textures} and \ref{fig: languages}   respectively for the Animals, Textures and Languages datasets.

\begin{figure}[h]
     \centering
     \begin{subfigure}[b]{0.44\textwidth}
         \centering
         \includegraphics[width=\textwidth]{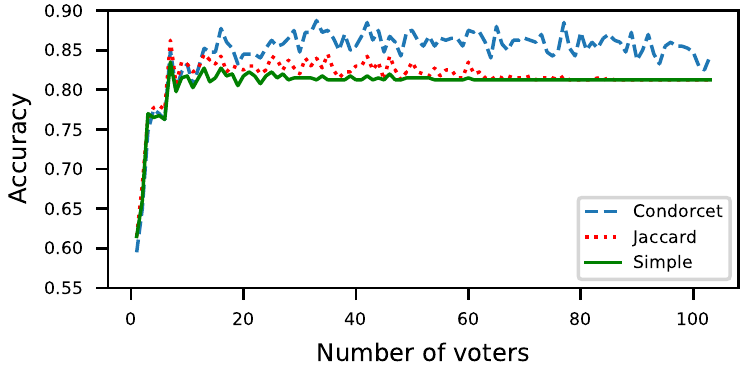}
         \subcaption{Animals}
          \label{fig: animals}
     \end{subfigure}
     \begin{subfigure}[b]{0.44\textwidth}
         \centering
         \includegraphics[width=\textwidth]{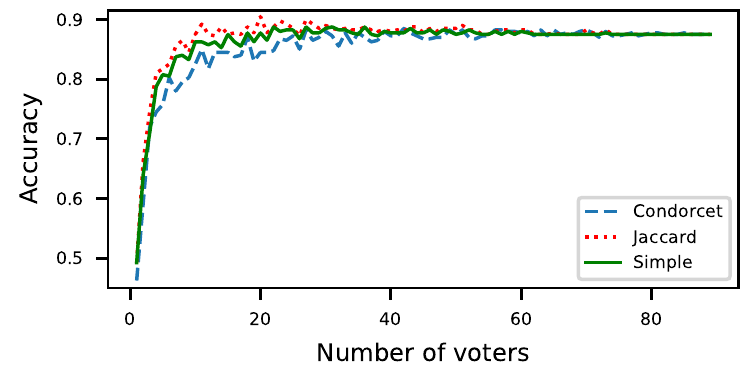}
         \subcaption{Textures}
          \label{fig: textures}
     \end{subfigure}
     \begin{subfigure}[b]{0.44\textwidth}
         \centering
         \includegraphics[width=\textwidth]{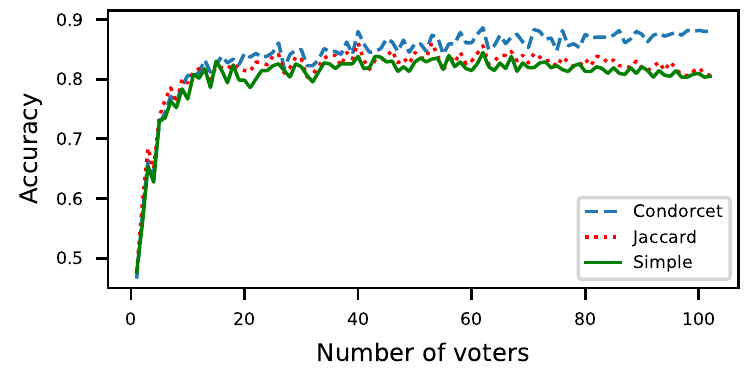}
         \subcaption{Languages}
         \label{fig: languages}
     \end{subfigure}
        \caption{Accuracy of different rules on real-life datasets}
        \label{fig: accuracy}
\end{figure}

\paragraph{Observations:}
First we notice that for all the three datasets, the aggregation rules associated to Jaccard anonymous noise show slightly better accuracy than the simple approval rule especially for small number of voters. 

We can also see that the aggregation rule associated to the non-anonymous Condorcet noise show significant improvement in the accuracy compared to this rule for Animals and Languages (specially for relatively big numbers of voters). However it fails to outperform it for the Textures dataset, where it only shows similar accuracies to the standard rule as the number of voters grows. This can be the result of the poor estimation quality which uses only one sample. 

\section{Conclusion}\label{sec: conclusion}
We propose a novel approach for epistemic approval voting based on the intuition that more reliable votes contain fewer alternatives. First, we show that for different anonymous variants of Mallows-like noise models, the maximum likelihood rule is size-decreasing, \textit{i.e} it assigns more weight to smaller ballots. Then we consider non-anonymous noises and give a sufficient condition to have an expected size of the ballot which increases as a voter gets less reliable. In particular, we prove that for a Condorcet-like noise, the expected number of approved alternatives decreases linearly with the voter's precision. Finally, we conduct experiments to test our hypothesis on real data and to assess the performances of different aggregation rules.

These methods may fail in two possible scenarios. First, if the voters do not respond truthfully. In this case, a voter can select a single alternative even though she totally ignores the correct answer. Second, if a large enough group of non-expert voters are mistakenly over-self-confident, whereas the experts are uncertain about their responses.

\bibliography{ms.bib}

\end{document}